\title[Discrepancy and Quantum Dynamics]{Semi-algebraic discrepancy estimates for multi-frequency shift sequences with applications to quantum dynamics}
\author[]{Wencai Liu}
\address[Wencai Liu]{
	Department of Mathematics, Texas A\&M University, College Station, TX, 77843, USA.
}
\email{liuwencai1226@gmail.com, wencail@tamu.edu}
\author[]{Matthew Powell}
\address[Matthew Powell]{
	Department of Mathematics, Georgia Institute of Technology, Atlanta, GA, 30332, USA.
}
\email{powell@math.gatech.edu}
\author[]{Yiding Max Tang}
\address[Yiding Max Tang]{
        John Burroughs School, St. Louis, MO, 63124, USA.
}
\email{yidingmaxtang@gmail.com, 2028.mtang@jburroughs.org}
\author[]{Xueyin Wang}
\address[Xueyin Wang]{
	Department of Mathematics, Texas A\&M University, College Station, TX, 77843, USA.
}
\email{xueyin@tamu.edu}
\author[]{Ruixiang Zhang}
\address[Ruixiang Zhang]{
    Department of Mathematics, University of California, Berkeley, California 94720, USA
}
\email{ruixiang@berkeley.edu}
\author[]{Justin Zhou}
\address[Justin Zhou]{
        Sperreng Middle School, St. Louis, MO, 63128, USA.
}
\email{justinzhou365@gmail.com}
\newcommand{\Z}{\mathbb{Z}}
\newcommand{\N}{\mathbb{N}}
\newcommand{\T}{\mathbb{T}}
\newcommand{\R}{\mathbb{R}}
\newcommand{\C}{\mathbb{C}}
\theoremstyle{plain}
\newtheorem{theorem}{Theorem}[section]
\newtheorem{corollary}[theorem]{Corollary}
\newtheorem{lemma}[theorem]{Lemma}
\newtheorem{proposition}[theorem]{Proposition}
\theoremstyle{definition}
\newtheorem{definition}[theorem]{Definition}
\newtheorem{remark}[theorem]{Remark}
\DeclareMathOperator{\leb}{Leb}
\DeclareMathOperator{\supp}{supp}
\DeclareMathOperator{\dist}{dist}
\DeclareMathOperator{\spn}{span}
\begin{document}

\begin{abstract}
	We establish asymptotically sharp semi-algebraic discrepancy estimates for multi-frequency  shift sequences. As an application, we obtain novel upper bounds for the quantum dynamics of long-range quasi-periodic Schr\"odinger operators.
\end{abstract}

\maketitle	

\section{Introduction}
In this paper, we are interested in semi-algebraic discrepancy bounds for multi-frequency shift sequences $\{\theta+n\alpha\bmod\mathbb{Z}^{b}\}_{n\in\mathbb{N}}$. More specifically, we will consider a semi-algebraic set $\mathcal S\subseteq [0,1]^{b}$ with $\deg (\mathcal{S})=B$ (see Definition \ref{def:SA} for details). We will fix $\theta, \alpha \in \T^b$ and estimate
\begin{equation}\label{eq:SADisc}
\# \{0 \leqslant  n \leqslant N: \theta + n\alpha \bmod \mathbb{Z}^{b} \in \mathcal S\}.
\end{equation}
Classically, the discrepancy of a sequence $\{x_n\}_{n\in\mathbb{N}} \subseteq \T^b$ is
\begin{equation*}
    D_N(x_n) := \sup_{I \in \mathcal C} \bigg|\frac{\#\{1 \leqslant n \leqslant N: x_n \in I\}}{N} - \leb (I)\bigg|,
\end{equation*}
where $\mathcal C$ denotes the family of all axis-aligned cubes in $[0,1]^b$.
In light of the classical definition, we call \eqref{eq:SADisc} the \textbf{semi-algebraic discrepancy} of the sequence $\{\theta + n\alpha\bmod\mathbb{Z}^{b}\}_{n \in \N}$ since we consider more general sets than cubes. 

As is typical in the study of discrepancy, we will give careful consideration to the arithmetic properties of the frequency vector $\alpha$. To facilitate this, we will introduce two classes of frequencies.

We say that $\alpha \in \mathbb{T}^b$ is \emph{Diophantine} with parameters $\gamma > 0$ and $\tau \geqslant b$, and write $\alpha \in DC(\gamma, \tau)$, if
\begin{equation*}
    \|\langle n, \alpha \rangle\|_{\mathbb{T}} \geqslant \frac{\gamma}{\|n\|_\infty^{\tau}} \quad \text{for all } n \in \mathbb{Z}^b \setminus \{0\},
\end{equation*}
where $\|n\|_\infty = \max_{1 \leqslant j \leqslant b} |n_j|$, $\|x\|_{\mathbb{T}} = \operatorname{dist}(x, \mathbb{Z})$, and  $\langle \cdot, \cdot\rangle$ is the usual $\R^b$ scalar product. It is well-known that for any $\tau > b$, the set
\begin{equation*}
    DC(\tau) := \bigcup_{\gamma > 0} DC(\gamma, \tau)
\end{equation*}
has full Lebesgue measure in $\mathbb{T}^b$.

We also define the notion of a \emph{weak Diophantine} frequency. A vector $\alpha \in \mathbb{T}^b$ is said to be weakly Diophantine with parameters $\gamma > 0$ and $\tau \geqslant 1/b$, denoted by $\alpha \in WDC(\gamma, \tau)$, if
\begin{equation*}
    \|n\alpha\|_{\mathbb{T}^b} \geqslant \frac{\gamma}{|n|^{\tau}} \quad \text{for all } n \in \mathbb{Z} \setminus \{0\},
\end{equation*}
where $\|x\|_{\mathbb{T}^b} := \operatorname{dist}(x, \mathbb{Z}^b) = \max_{1 \leqslant j \leqslant b} \|x_j\|_{\mathbb{T}}$. 
It is known that, for any $\tau > 1/b$, the set
\begin{equation*}
    WDC(\tau) := \bigcup_{\gamma > 0} WDC(\gamma, \tau)
\end{equation*}
also has full Lebesgue measure in $\mathbb{T}^{b}$. 

Before proceeding, let us comment on our use of the two classes $DC$ and $WDC$. In previous works in the spectral theory literature, the large deviation theorem, which yields semi-algebraic exceptional sets used to obtain quantum dynamics, was established through harmonic analysis and crucially relies on $\alpha \in DC$ (see \cite{Bou05, BGLocal, LiuPDE}). 
On the other hand, estimates on classical discrepancy can be derived using the dynamical structure of Kronecker sequences $\{n\alpha\}$. In this context, it is more convenient for us to use $WDC$ \cite{hughbook, Khintchine}. The condition $WDC$ can also be viewed as a natural extension of properties of continued fraction expansions to vectors in $\T^b$.

Our focus will be on properties of so-called semi-algebraic sets.

\begin{definition}[{\cite[Chapter 9]{Bou05}}]\label{def:SA}
	We say $\mathcal{S}\subseteq \mathbb{R}^{b}$ is a semi-algebraic set if it is a finite union of sets defined by a finite number of polynomial inequalities. More precisely, let $\{P_{1}, P_{2}, \cdots, P_{s}\}$ be a family of real polynomials to the variables $x=(x_{1}, x_{2}, \cdots, x_{b})$ with $\deg(P_{i})\leqslant d$ for $i=1,2,\cdots, s$. A (closed) semi-algebraic set $\mathcal{S}$ is given by the expression
	\begin{equation}\label{sas}
		\mathcal{S}=\bigcup_{j} \bigcap_{\ell\in \mathcal{L}_{j}} \{x\in \mathbb{R}^{b}: P_{\ell}(x) \ \varsigma_{j\ell} \ 0\},
	\end{equation}
	where $\mathcal{L}_{j}\subseteq \{1,2,\cdots, s\}$ and $\varsigma_{j\ell}\in \{\geqslant, \leqslant, =\}$. Then we say that the degree of $\mathcal{S}$, denoted by $\deg(\mathcal{S})$, is at most $sd$. In fact, $\deg(\mathcal{S})$ means the smallest $sd$ overall representation as in \eqref{sas}.
\end{definition}

Throughout the paper, we write $A \lesssim_{*} B$ to indicate that there exists a constant $C > 0$ depending on $*$ such that $A \leqslant C B$. For any $x \in \mathbb{R}$, we will also denote by $[x]$ the integer part and by $\{x\}$ the fractional part of $x$. Throughout, we will use the convention that $0 \leqslant \{x\} < 1.$ For $x = (x_1, \dots, x_b) \in \mathbb{R}^b$, we set  
\begin{equation*}
    \{x\} := (\{x_1\}, \dots, \{x_b\}).
\end{equation*}

Our first result is an upper bound on the semi-algebraic discrepancy.

\begin{theorem}\label{aethm}
	Let  $\alpha\in WDC(\gamma, \tau)$. Let $\mathcal{S}\subseteq [0,1]^{b}$ be a semi-algebraic set of degree $B$ and $\leb(\mathcal{S})\leqslant\eta$. Let $N$ be an integer such that 
	\begin{equation*}
		\log N< \frac{1}{2\tau b}\log \frac{1}{\eta}.
	\end{equation*}
	Then for every $\theta \in \mathbb{T}^{b}$, 
	\begin{equation*}
		\#\{1\leqslant n\leqslant N: \theta+n\alpha \bmod \mathbb{Z}^{b}\in \mathcal{S}\} \lesssim_{\gamma,\tau,b} B^{C(b)} N^{\tau(b-1)}.
	\end{equation*}
\end{theorem}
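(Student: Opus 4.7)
The plan rests on two observations that work in tension: the weak Diophantine hypothesis forces the orbit $\{\theta + n\alpha \bmod \mathbb{Z}^b\}$ to be strongly separated, while the semi-algebraic structure forces $\mathcal{S}$ to be coverable by few small cubes. First I would note that for any $1 \leqslant n \neq m \leqslant N$,
\[
\|(\theta + n\alpha) - (\theta + m\alpha)\|_{\mathbb{T}^b} = \|(n-m)\alpha\|_{\mathbb{T}^b} \geqslant \frac{\gamma}{|n-m|^\tau} \geqslant \frac{\gamma}{N^\tau},
\]
so the orbit is $(\gamma/N^\tau)$-separated in the $\ell^\infty$ metric on $\mathbb{T}^b$. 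Consequently, any closed axis-parallel cube of side length $\epsilon < \gamma/N^\tau$ contains at most one orbit point.

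Next I would apply a semi-algebraic covering lemma of the type developed in \cite[Ch.~9]{Bou05}: for any $\epsilon \in (0,1)$, the set $\mathcal{S}$ admits a cover by axis-parallel cubes of side $\epsilon$ whose covering number satisfies
\[
N_{\epsilon}(\mathcal{S}) \lesssim_{b} B^{C(b)}\!\left(\frac{\eta}{\epsilon^{b}} + \frac{1}{\epsilon^{b-1}}\right).
\]
The first term is the bulk/volume contribution, while the second reflects the $(b-1)$-dimensional complexity of $\partial\mathcal{S}$, which is a union of at most $\lesssim_{b} B^{C(b)}$ hypersurfaces of degree $\lesssim B$; each admits a standard $\epsilon$-cover of cardinality $\lesssim \epsilon^{1-b}$ via a cylindrical algebraic decomposition (or a Yomdin--Gromov $C^{r}$-reparametrization).

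To close, I would set $\epsilon = \gamma/(2N^{\tau})$. By the separation step each cube contains at most one orbit point, so
\[
\#\{1 \leqslant n \leqslant N : \theta + n\alpha \bmod \mathbb{Z}^{b} \in \mathcal{S}\} \leqslant N_{\epsilon}(\mathcal{S}) \lesssim_{\gamma,b} B^{C(b)}\!\left(\eta N^{\tau b} + N^{\tau(b-1)}\right).
\]
The hypothesis $\log N < \tfrac{1}{2\tau b}\log(1/\eta)$, equivalently $\eta \leqslant N^{-2\tau b}$, forces $\eta N^{\tau b} \leqslant N^{-\tau b} \leqslant N^{\tau(b-1)}$, so the boundary term dominates and the claimed bound $B^{C(b)} N^{\tau(b-1)}$ follows (with the $\gamma,\tau,b$-dependent constant absorbing the factor $\gamma^{1-b}$).

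The main obstacle is extracting the sharp covering estimate in the middle step with the correct polynomial dependence on $B$ and, crucially, with no parasitic factors of $N$ in the boundary term. Controlling how the degree propagates through a cylindrical decomposition (or through a smooth reparametrization) is where essentially all the real work lies; once that quantitative covering is in hand, the separation/counting scheme above is elementary. A minor, and routine, technical point is accounting for possible wrap-around of $\epsilon$-cubes at $\partial [0,1]^{b}$ when transferring the covering from $[0,1]^{b}$ to $\mathbb{T}^{b}$, which can be absorbed into $B^{C(b)}$.
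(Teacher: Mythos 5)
Your proposal is correct and follows essentially the same route as the paper: a separation bound from the $WDC$ hypothesis combined with a semi-algebraic covering estimate at scale $\epsilon = \gamma/(2N^{\tau})$. The only cosmetic difference is that the paper directly invokes Bourgain's Corollary~9.6, which assumes $\leb(\mathcal{S}) \leqslant \epsilon^{b}$ and then yields a covering by $B^{C(b)}\epsilon^{1-b}$ balls, whereas your two-term covering bound reduces to the same thing once the $\log N$ hypothesis is used to show $\eta \leqslant \epsilon^{b}$; the ``real work'' you flag in the covering step is indeed delegated to that cited lemma.
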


Taking $\tau = 1/b + \varepsilon$ and $\eta\to 0$ in Theorem \ref{aethm}, one quickly obtains 
\begin{equation}\label{eq:asymUpperBound}
    \# \{1\leqslant n\leqslant N: \theta+n\alpha\bmod\mathbb{Z}^{b}\in\mathcal{S}\}\leqslant N^{\frac{b-1}{b}+\varepsilon}.
\end{equation}
We contrast this upper bound with a nearly identical lower bound. In fact, our obtained lower bound agrees with \eqref{eq:asymUpperBound} in the $b\to \infty$ limit.

\begin{theorem}\label{lowbound}
   For almost every $\alpha\in \mathbb{T}^{b}$, the following holds.  For any small $\varepsilon>0$, there exists  $N_{0}=N_{0}(\alpha,\varepsilon)$ such that for any $N\geqslant N_{0}$ there exists a hyperplane $\mathcal{S}\subseteq [0,1]^{b}$ such that
   \begin{equation*}
       \#\{1\leqslant n\leqslant N:n\alpha\bmod\mathbb{Z}^{b}\in\mathcal{S}\} \geqslant N^{\frac{b-1}{b+1}-\varepsilon}.
   \end{equation*}
\end{theorem}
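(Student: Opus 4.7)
The plan is to construct, for each sufficiently large $N$, a $(b-1)$-dimensional integer arithmetic progression of $\asymp N^{(b-1)/(b+1)}$ times $n\in[1,N]$ whose shifts $\{n\alpha\}\in\T^{b}$ all lie on a common affine hyperplane in $[0,1]^{b}$. The AP will be generated by $b-1$ linearly independent simultaneous Diophantine approximations of $\alpha$ at the scale $Q\sim N^{b/(b+1)}$; with $M:=\lfloor N^{1/(b+1)}/(2b)\rfloor$, the AP will have $\sim(2M+1)^{b-1}\sim N^{(b-1)/(b+1)}$ elements, matching the claimed lower bound.

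\textbf{Extracting the approximations.} I would consider the rank-$(b+1)$ unimodular Dirichlet lattice
\[
\Lambda_{Q}:=\operatorname{diag}(Q^{-1},Q^{1/b},\ldots,Q^{1/b})\cdot\{(n,\,n\alpha-p):(n,p)\in\Z\times\Z^{b}\}\subseteq\R^{b+1}.
\]
Minkowski's second theorem gives $\prod_{k=1}^{b+1}\lambda_{k}(\Lambda_{Q})\asymp 1$. For a.e.\ $\alpha$, by the Dani correspondence and Birkhoff's theorem applied to the diagonal flow on $SL_{b+1}(\R)/SL_{b+1}(\Z)$, the set of $Q$ for which $\lambda_{b+1}(\Lambda_{Q})\leqslant C_{b}$ (a Mahler-compact subset of positive Haar measure) has positive logarithmic density, so for $N$ sufficiently large one can pick $Q\in[N^{b/(b+1)-\varepsilon/(2b)},N^{b/(b+1)}]$ in this set. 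Minkowski's theorem then supplies $b+1$ linearly independent lattice vectors, equivalently positive integers $q^{(j)}\leqslant C_{b}Q$ and $p^{(j)}\in\Z^{b}$ with $\delta^{(j)}:=q^{(j)}\alpha-p^{(j)}$ of size $\|\delta^{(j)}\|_{\infty}\leqslant C_{b}Q^{-1/b}$. Since the projection $(q,\delta)\mapsto\delta$ has $1$-dimensional kernel and the $b+1$ vectors span $\R^{b+1}$, the $\delta^{(j)}$'s span $\R^{b}$; I select $b-1$ linearly independent among them, relabelled $\delta_{1},\ldots,\delta_{b-1}$ with corresponding denominators $q_{1},\ldots,q_{b-1}$.

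\textbf{Forming the hyperplane and counting.} By equidistribution of $\{n\alpha\}$, I would pick an anchor $n_{0}\in[N/3,2N/3]$ with $\xi:=\{n_{0}\alpha\}\in[1/3,2/3]^{b}$ and define
\[
n_{\mathbf i}:=n_{0}+\sum_{j=1}^{b-1}i_{j}q_{j},\qquad\mathbf i=(i_{1},\ldots,i_{b-1})\in\{-M,\ldots,M\}^{b-1}.
\]
The choice of $M$ gives $|n_{\mathbf i}-n_{0}|\leqslant (b-1)MC_{b}Q\leqslant N/3$ (so $n_{\mathbf i}\in[1,N]$) and $\sum_{j}|i_{j}|\|\delta_{j}\|_{\infty}\leqslant (b-1)MC_{b}Q^{-1/b}\leqslant 1/4$ (so $\xi+\sum_{j}i_{j}\delta_{j}\in(0,1)^{b}$, no wrapping modulo $\Z^{b}$). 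Hence each $\{n_{\mathbf i}\alpha\}=\xi+\sum_{j}i_{j}\delta_{j}$ lies on the affine hyperplane $\mathcal S:=(\xi+\spn(\delta_{1},\ldots,\delta_{b-1}))\cap[0,1]^{b}$. Collisions $n_{\mathbf i}=n_{\mathbf i'}$ correspond to nonzero $\mathbf k\in\Z^{b-1}$ with $\sum_{j}k_{j}q_{j}=0$ and $\|\mathbf k\|_{\infty}\leqslant 2M$; such $\mathbf k$ lie on a rank-$(b-2)$ sublattice of covolume $\asymp Q$, giving only $O(M^{b-2}/Q)$ collision vectors, and a Cauchy--Schwarz additive-energy estimate yields at least $\tfrac{1}{2}(2M+1)^{b-1}\geqslant N^{(b-1)/(b+1)-\varepsilon}$ distinct times $n_{\mathbf i}$, each contributing a point of $\mathcal S$.

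\textbf{The principal obstacle.} The hardest step is producing $b-1$ linearly independent short vectors at the prescribed scale: Minkowski's first theorem alone yields only one, and for specific $Q$ the successive minima of $\Lambda_{Q}$ can be unbalanced. The homogeneous-dynamics input from the Dani correspondence (equidistribution of the diagonal orbit on the space of lattices) is the critical tool here, and is exactly where the ``almost every $\alpha$'' hypothesis enters. The small flexibility in choosing $Q$ within a multiplicative window of width $N^{\varepsilon/(2b)}$ is absorbed into the $\varepsilon$-loss in the final exponent.
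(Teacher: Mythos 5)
Your construction parallels the paper's at the level of scaling — both produce a $(b-1)$-parameter family of times in $[1,N]$, each parameter ranging over $\asymp N^{1/(b+1)}$ values, whose shifts all lie on a hyperplane spanned by $b-1$ linearly independent approximants $\delta_j$ with $q_j\lesssim N^{b/(b+1)}$ and $\|\delta_j\|\lesssim N^{-1/(b+1)}$ — but you obtain the $\delta_j$ by a genuinely different route. The paper uses Schmidt's a.e.\ discrepancy bound $D_N(\{n\alpha\})\lesssim N^{-1}(\log N)^{b+2}$ (Lemma~\ref{schdis}) together with an inductive argument (Proposition~\ref{plane}): at each stage a target $w$ with $\|w\|_2=\tfrac12 N^{-1/b}$ is chosen at angular distance $\geq\delta$ from the existing span, and the discrepancy bound supplies some $n\lesssim N^{1+2\varepsilon}$ with $\{n\alpha\}$ close enough to $w$ to inherit both properties. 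You instead invoke the Dani correspondence and Birkhoff ergodicity of the diagonal flow on $SL_{b+1}(\R)/SL_{b+1}(\Z)$ to locate a scale $Q$ at which the Dirichlet lattice has bounded successive minima, then extract the $\delta_j$ via Minkowski's second theorem. Both use the ``a.e.\ $\alpha$'' hypothesis in an essential way; Schmidt's theorem is the more elementary input and gives one vector at a time with explicit control, while the lattice route delivers all $b-1$ vectors at once at the cost of homogeneous-dynamics machinery.

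Two points need repair. Most significantly, your collision analysis is both unnecessary and unjustified as written. Since $\delta_1,\dots,\delta_{b-1}$ are linearly independent and $\{n_{\mathbf i}\alpha\}=\xi+\sum_j i_j\delta_j$ exactly (no wrap-around, by your size constraints), the map $\mathbf i\mapsto\{n_{\mathbf i}\alpha\}$ is injective, hence so is $\mathbf i\mapsto n_{\mathbf i}$; every index gives a distinct time, and the Cauchy--Schwarz step can simply be deleted. This injectivity is exactly what the paper uses via \eqref{gjun252}. As stated, your claim that the relation lattice $\{\mathbf k:\sum_j k_jq_j=0\}$ has covolume $\asymp Q$ does not follow — the covolume is $\|q\|_2/\gcd(q_1,\dots,q_{b-1})$ and nothing rules out a large gcd — and in any case covolume alone does not bound the number of lattice points in a box of side $M$ without a lower bound on the shortest vector. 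Second, the fixed choice $M=\lfloor N^{1/(b+1)}/(2b)\rfloor$ does not guarantee $\sum_j|i_j|\|\delta_j\|_\infty\leqslant 1/4$ when $Q$ sits at the bottom of its window $N^{b/(b+1)-\varepsilon/(2b)}$: then $MQ^{-1/b}\asymp N^{\varepsilon/(2b^2)}\to\infty$ and the fractional parts wrap. Taking $M\asymp Q^{1/b}$ (depending on the $Q$ actually found, not on $N$ alone) repairs the no-wrap and range constraints while still giving $(2M+1)^{b-1}\gtrsim Q^{(b-1)/b}\geqslant N^{(b-1)/(b+1)-\varepsilon}$.
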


\begin{remark}
   Every hyperplane $\mathcal{S}\subseteq [0,1]^{b}$   is a semi-algebraic set with $\deg (\mathcal{S})\leqslant 2b+1$.
\end{remark}

In light of Theorem \ref{lowbound}, we can see that the upper bound in Theorem \ref{aethm} is \emph{asymptotically sharp in dimension $b$}, in the sense that the ratio between the exponents of upper bound and lower bound tends to 1 as $b \to \infty$:
\begin{equation*}
    \lim_{b \to \infty} \frac{(b-1)/b}{(b-1)/(b+1)}=1.
\end{equation*}

It should be noted that Theorem \ref{aethm} (and particularly \eqref{eq:asymUpperBound}) is, to the best of our knowledge, an improvement on previous best-known results. In particular, for a.e. $\alpha$, \cite{HanJitomirskaya} obtained a bound of $N^{1 - \frac{1}{b^2(b-1) + b} + \varepsilon}$; \cite{JPo} obtained a bound of $N^{1 - \frac{1}{2b} + \varepsilon}$; and \cite{LiuPDE} obtained a bound of $N^{1 - \frac{1}{b^2} + \varepsilon}$. Notably, we have improved the exponent to $1 - \frac 1 b+\varepsilon$. By considering $WDC$, instead of $DC$ as in \cite{HanJitomirskaya, JPo, LiuPDE}, we are able to simultaneously improve the discrepancy estimate and streamline the proof.

As a consequence of our main results, we obtain an upper bound on wave-packet spreading under Schr\"odinger dynamics, otherwise known as quantum dynamics. 
More specifically, we consider a discrete long-range quasi-periodic Schr\"odinger operator $H_{\theta,\alpha}:\ell^2(\Z)\rightarrow \ell^{2}(\mathbb{Z})$
\begin{equation}\label{eq:LongRangeDef}
    \left(H_{\theta,\alpha}\psi\right) (n) = \sum_{m \in \Z} A(n,m) \psi(m) + \lambda V(\theta + n\alpha) \psi(n),
\end{equation}
where $V$ is a real analytic function on the $b$-dimensional torus $\T^b$ (the space of such functions will be denoted $C^\omega(\T^b,\mathbb{R})$), $\theta\in \T^b$ is a phase, $\alpha \in \T^b$ is a frequency, 
$\lambda > 0$ is a coupling constant, and $A(n,m)$ satisfies
\begin{align}
    A(n,m) &= \overline{A(m,n)},\label{eq:SelfAdjoint}\\
    |A(n,m)| &\leqslant C_{1}\mathrm{e}^{-c_{1}|n - m|}, \label{eq:Bounded}\\
    A(n,m)&=A(n+k,m+k),\ k\in\mathbb{Z},\label{eq:ShiftInvariant}
\end{align}
Note that \eqref{eq:SelfAdjoint} and \eqref{eq:Bounded} imply that $H_{\theta,\alpha}$ is both bounded and self-adjoint while \eqref{eq:ShiftInvariant} ensures that $H_{\theta,\alpha}$ is a Toeplitz matrix. 
Moreover, if $A(\cdot, \cdot)$ is the discrete Laplacian, i.e.,
\begin{equation*}
    A(n,m) = \begin{cases}
        0, &|n - m| \neq 1,\\
        1, &|n - m| = 1,
    \end{cases}
\end{equation*}
then \eqref{eq:LongRangeDef} reduces to the classical quasi-periodic Schr\"odinger operator on $\ell^2(\Z).$

Quasi-periodic models, such as the one we consider here, have been studied extensively. The property of particular interest in this paper is quantum dynamics (see \cite{BJBand, Fillman,Fil21, MR4651730, GeKachkovskiy, JitomirskayaZhang,  LiuQD23, ZhangZhao, Zhao, HanJitomirskaya, JPo, ShamisSodinQD,APDL,MR3339185}).

More precisely, consider $H_{\theta,\alpha}$ as the Hamiltonian of a quantum system, and let $\langle \cdot, \cdot \rangle$ denote the inner product on $\ell^2(\mathbb{Z})$. Starting from a compactly supported initial state $\psi \in \ell^2(\mathbb{Z})$ with $\|\psi\|_{2} = 1$, the system evolves under Schr\"odinger dynamics as
\begin{equation*}
    \psi_t(n) = \langle \mathrm{e}^{-\mathrm{i}tH_{\theta,\alpha}} \psi,\delta_{n}\rangle.
\end{equation*}
One of the primary questions of interest is the quantum dynamics of $H_{\theta,\alpha}$: how does $\psi_t$ behave as a function of time $t$ and space $n$? Since $\supp(H_{\theta,\alpha}\psi)$ is typically not compact (at the very least, the support will grow as long as $A(\cdot,\cdot)$ is non-trivial), $\supp(\psi_t)$ will typically grow. The Schr\"odinger evolution, 
$\mathrm{e}^{-\mathrm{i}tH_{\theta,\alpha}},$ 
however, 
is unitary, so $\|\psi_t\|_2 = 1$ for all $t$. 
After this observation, we see that the wave-packet $\psi_t$ ``spreads out'' over time $t$. One way to measure how fast this spreading occurs is to consider where the ``bulk'' of $\psi_t$ is located. That is, there is a $k_t\in \N$ such that 
\begin{equation*}
    \|\psi_t \chi_{[-k_t,k_t]}\|_2 \geqslant 3/4
\end{equation*}
while
\begin{equation*}
    \|\psi_t\chi_{[-k_t+1, k_t-1]}\|_2 < 3/4,
\end{equation*}
where $\chi_{I}$ is the characteristic function of $I$. How $k_t$ behaves, as a function of $t$, is essentially captured by the moments of the position operator
\begin{equation}\label{eq:pthMoment}
    \langle| X_{H_{\theta,\alpha}}|_\psi^p\rangle(t) = \sum_{n \in \mathbb{Z}} |n|^{p} |\langle\mathrm{e}^{-\mathrm{i}tH_{\theta,\alpha}}\psi,\delta_n\rangle|^2,
\end{equation}
see \cite{DamanikTcherem1, DamanikTcherem2} for more details. Since
\begin{equation*}
    d\nu_t := \sum_{n \in \mathbb{Z}} |\langle\mathrm{e}^{-\mathrm{i}tH_{\theta,\alpha}}\psi,\delta_n\rangle |^2 d\delta_n
\end{equation*}
is a probability measure, \eqref{eq:pthMoment} can be interpreted as the $p$-th moment of this probability measure, and thus measures how thin the tail of the measure is (or where the ``bulk'' of the measure is). The $p$-th moment can also be interpreted as the norm of a suitable power of the position operator, which is the origin of the notation $|X|^p$.

It is well-known that the growth of \eqref{eq:pthMoment} is sensitive to the phase $\theta$, which can often be traced to resonant behavior. It is possible, however, to circumvent this sensitivity by averaging over $\theta$. Indeed, it has been shown that the average of \eqref{eq:pthMoment} (in $\theta$) remains finite (known as dynamical localization in expectation) for a.e. $\alpha$ (see, e.g. \cite{Bou05, MR4637128, MR4862298, APDL,JLM24,JKL20}). However, dynamical localization does not hold for every $\theta$ (and in fact does not hold generically) \cite{JitoSimonSC}, so uniform in $\theta$ estimates without averaging are of particular interest.

As an application of Theorem \ref{aethm}, we obtain the following (uniform in $\theta$).
\begin{theorem}\label{mainthm3}
    Let $\alpha\in WDC(\tau) \cap DC(\tau')$. Suppose $V\in C^{\omega}(\mathbb{T}^{b},\mathbb{R})$ is non-constant. 
    Then there exists $\lambda_{0}=\lambda_{0}(A,\alpha,V)>0$ such that for any $\lambda>\lambda_{0}$, $\varepsilon>0$,  $p>0$ and $\psi\in \ell^{2}(\mathbb{Z})$ with compact support, there is $T_{0}=T_{0}(A,\alpha,V,\varepsilon,p,\psi,b)>0$ such that for any $T\geqslant T_{0}$,
    \begin{equation*}
        \sup_{\theta\in\mathbb{T}^{b}} \langle |X_{H_{\theta,\alpha}}|_{\psi}^{p} \rangle (T) \leqslant  (\log T)^{\frac{p}{1-\tau(b-1)}+\varepsilon}.
    \end{equation*}
\end{theorem}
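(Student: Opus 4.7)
The plan is to combine Theorem \ref{aethm} with a Green's-function / barrier argument of Bourgain--Goldstein type, converting semi-algebraic discrepancy bounds on ``bad'' phases into dynamical bounds that are uniform in $\theta$. First, using $\alpha\in DC(\tau')$ and $\lambda>\lambda_{0}$ sufficiently large, I would invoke the multi-scale analysis and semi-algebraic large deviation theorem for the long-range operator $H_{\theta,\alpha}$ (in the spirit of \cite{Bou05,BGLocal,LiuPDE}) to produce, at every scale $N$, a semi-algebraic exceptional set $\mathcal{B}_{N}\subseteq\mathbb{T}^{b}$ of degree at most $N^{C_{1}}$ and Lebesgue measure at most $\mathrm{e}^{-N^{\sigma}}$ (for some $\sigma>0$), such that whenever $\theta+n\alpha\bmod\mathbb{Z}^{b}\notin\mathcal{B}_{N}$, the Green's function $G_{\Lambda}(\theta;E)$ on a window $\Lambda$ of length $\asymp N$ containing $n$ decays exponentially, uniformly in $E$ in the spectrum.

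Given a large time $T$, set $N\asymp c\log T$ (so that an exponentially decaying Green's function barrier of length $N$ blocks dynamical propagation up to time $T$) and $M\asymp(\log T)^{1/(1-\tau(b-1))+\varepsilon/p}$. With $\eta=\mathrm{e}^{-N^{\sigma}}$, the hypothesis $\log M<\frac{1}{2\tau b}\log(1/\eta)$ of Theorem \ref{aethm} holds once $T$ is large, and applying that theorem to $\mathcal{S}=\mathcal{B}_{N}$ gives
\begin{equation*}
    \#\{1\leqslant n\leqslant M:\theta+n\alpha\bmod\mathbb{Z}^{b}\in\mathcal{B}_{N}\}\lesssim_{\gamma,\tau,b}N^{C_{1}C(b)}M^{\tau(b-1)}\ll M/N,
\end{equation*}
so the interval $[\supp\psi,M]$ (and its reflection on the left) contains many disjoint good windows of length $N$.

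Third, I would run the standard telescoping / resolvent-identity barrier argument through such a good window, adapted to the long-range off-diagonals $A(n,m)$, to obtain
\begin{equation*}
    \sup_{\theta\in\mathbb{T}^{b}}\bigl\|\chi_{|n|\geqslant M}\,\mathrm{e}^{-\mathrm{i}tH_{\theta,\alpha}}\psi\bigr\|_{2}\lesssim\mathrm{e}^{-c'N}\leqslant T^{-c'}\quad\text{for all }|t|\leqslant T.
\end{equation*}
Splitting the moment as $\sum_{|n|\leqslant M}|n|^{p}|\psi_{T}(n)|^{2}+\sum_{|n|>M}|n|^{p}|\psi_{T}(n)|^{2}$, bounding the first sum by $M^{p}$ and the second via the crude ballistic estimate $|n|\lesssim T\|H_{\theta,\alpha}\|$ together with the barrier bound, one concludes
\begin{equation*}
    \sup_{\theta\in\mathbb{T}^{b}}\langle|X_{H_{\theta,\alpha}}|_{\psi}^{p}\rangle(T)\lesssim M^{p}+T^{p-c'}\lesssim(\log T)^{\frac{p}{1-\tau(b-1)}+\varepsilon},
\end{equation*}
which is the claimed bound once $T_{0}$ is chosen so that the tail term and the polylogarithmic losses are absorbed into the $+\varepsilon$.

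The main obstacle will be the barrier step in the long-range setting: since $A(n,m)$ is not compactly supported but only decays like $\mathrm{e}^{-c_{1}|n-m|}$, a single good window of length $N$ does not literally decouple its interior from its exterior, and one must take the window wide enough that the decay of $A$ dominates the Green's function growth outside. This constrains the constant $c$ in $N\asymp c\log T$ and forces a careful bookkeeping of the LDT constants $C_{1}$ and $\sigma$. Crucially, uniformity in $\theta$ must be preserved throughout, which is precisely why the semi-algebraic discrepancy statement of Theorem \ref{aethm} (rather than an averaged-in-$\theta$ version) is essential to the argument.
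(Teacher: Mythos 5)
Your high-level architecture agrees with the paper's: both combine the Bourgain--Goldstein type large deviation theorem (Theorem \ref{LDTshift}, from \cite{LiuPDE}) with Theorem \ref{aethm} and a Green's-function barrier criterion. The paper, however, cites a packaged criterion, Theorem \ref{criterion} (Corollary 2.3 of \cite{LiuQD23}), which converts the sublinear counting hypothesis $\#\mathcal{B}_{N,[N^\varepsilon]}\leqslant N^{1-\delta}$ directly into the moment bound $(\log T)^{p/\delta+\varepsilon}$, whereas you attempt to re-derive the barrier step from scratch.

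There is a genuine gap in your scale choice. You take the Green's function window at scale $N\asymp c\log T$ and sample over $1\leqslant n\leqslant M$ with $M\asymp(\log T)^{1/(1-\tau(b-1))+\varepsilon/p}$, and you assert $N^{C_1C(b)}M^{\tau(b-1)}\ll M/N$. Taking logarithms, this requires $C_1C(b)+1<1+(1-\tau(b-1))\varepsilon/p$, i.e.\ $C_1C(b)<(1-\tau(b-1))\varepsilon/p$, which is false for small $\varepsilon$. Concretely, the LDT exceptional set at scale $N$ has degree $\sim N^{C_1}$, so after Lemma \ref{covers} the discrepancy bound carries a factor $N^{C_1C(b)}\sim(\log T)^{C_1C(b)}$ that overwhelms the available margin $M/N\sim(\log T)^{\tau(b-1)/(1-\tau(b-1))+\varepsilon/p}$. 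As written your argument only produces a weaker exponent of order $pC_1C(b)/(1-\tau(b-1))$, not $p/(1-\tau(b-1))+\varepsilon$, and the pigeonhole producing a clean barrier window of length $\asymp\log T$ inside $[0,M]$ fails for the same reason.

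The paper sidesteps this by using a genuinely two-scale structure: the Green's function window has length $N_1=[N^\varepsilon]$, so that $\deg(\Theta_{N_1})^{C(b)}\leqslant N^{\varepsilon C C(b)}$ is absorbed into the $\varepsilon$-loss of the discrepancy bound; this is the content of \eqref{g07173} and \eqref{sublinearlast}. A single window of length $N^\varepsilon$ is of course far too short to block propagation up to time $T$, so Theorem \ref{criterion} must chain consecutive good positions: with at most $N^{1-\delta}$ bad positions in $[-N,N]$, pigeonhole gives a run of $\gtrsim N^\delta$ consecutive good ones, and telescoping Green's functions at scale $N_1$ across that run produces decay $e^{-cN^\delta}$, which is where the exponent $1/\delta=1/(1-\tau(b-1))$ actually comes from. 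This chaining step, and the decoupling of the window scale from the barrier length, are the missing ingredients in your outline. Your worries about the long-range off-diagonals and bookkeeping of $C_1$, $\sigma$ are legitimate but are already handled inside \cite{LiuQD23}; the issue you actually need to fix is the scale mismatch between the LDT window and the time scale.
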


This will follow from Theorem \ref{aethm}. It remains unclear to us whether Theorem \ref{mainthm3} is sharp, though if improvement is possible, other methods will have to be used, as Theorem \ref{aethm} shows that Theorem \ref{mainthm3} exhibits the correct asymptotic behavior in dimension $b$. 

As a consequence of our improvements in Theorem \ref{aethm}, our quantum dynamical bound in Theorem \ref{mainthm3} is also an improvement over the corresponding bounds in \cite{HanJitomirskaya, JPo, LiuPDE}.

Since $WDC(\tau)\cap DC(\tau')$ has full Lebesgue measure for $\tau>1/b$ and $\tau'>b$, we have the following immediate corollary.
\begin{corollary}
    For a.e. $\alpha\in\mathbb{T}^{d}$, the following holds. Suppose $V\in C^{\omega}(\mathbb{T}^{b},\mathbb{R})$ is non-constant. 
    Then there exists $\lambda_{0}=\lambda_{0}(A,\alpha,V)>0$ such that for any $\lambda>\lambda_{0}$, $\varepsilon>0$,  $p>0$ and $\psi\in \ell^{2}(\mathbb{Z})$ with compact support, there is $T_{0}=T_{0}(A,\alpha,V,\varepsilon,p,\psi,b)>0$ such that for any $T\geqslant T_{0}$,
    \begin{equation*}
        \sup_{\theta\in\mathbb{T}^{b}} \langle |X_{H_{\theta,\alpha}}|_{\psi}^{p} \rangle (T) \leqslant  (\log T)^{pb+\varepsilon}.
    \end{equation*}
\end{corollary}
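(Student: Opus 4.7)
The plan is to derive the corollary directly from Theorem~\ref{mainthm3} by letting the Diophantine parameter $\tau$ approach its lower threshold $1/b$ along a generic frequency.

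First, I would note that $WDC(\tau)$ has full Lebesgue measure in $\T^b$ for every $\tau > 1/b$, and $DC(\tau')$ has full Lebesgue measure for every $\tau' > b$, as recorded in the introduction. Taking a countable intersection along sequences $\tau_n \downarrow 1/b$ and $\tau_n' \downarrow b$, it follows that for almost every $\alpha \in \T^b$, $\alpha \in WDC(\tau) \cap DC(\tau')$ simultaneously for \emph{every} $\tau > 1/b$ and every $\tau' > b$.

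Next, I would fix such a generic $\alpha$ and let $\lambda_0 = \lambda_0(A,\alpha,V)$ be the threshold supplied by Theorem~\ref{mainthm3}. Given $\lambda > \lambda_0$, $\varepsilon > 0$, $p > 0$, and a compactly supported $\psi$, I would choose $\delta = \delta(\varepsilon,p,b) > 0$ small enough that setting $\tau = 1/b + \delta$ gives
\begin{equation*}
    \frac{p}{1-\tau(b-1)} \;=\; \frac{p}{\tfrac{1}{b}-\delta(b-1)} \;=\; \frac{pb}{1-b(b-1)\delta} \;\leqslant\; pb + \tfrac{\varepsilon}{2},
\end{equation*}
and fix any $\tau' > b$. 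Applying Theorem~\ref{mainthm3} with these values of $\tau$ and $\tau'$, and with $\varepsilon/2$ in place of $\varepsilon$, immediately yields
\begin{equation*}
    \sup_{\theta \in \T^b} \langle |X_{H_{\theta,\alpha}}|_\psi^p \rangle(T) \;\leqslant\; (\log T)^{pb+\varepsilon}
\end{equation*}
for all $T \geqslant T_0$, which is the claimed bound.

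No genuine obstacle arises here since Theorem~\ref{mainthm3} does the substantive work. The only mild points to verify are that the countable intersection of full-measure sets is organized correctly, and that because $\delta$ is determined purely by $\varepsilon, p, b$, the threshold $T_0$ from Theorem~\ref{mainthm3} still depends only on $(A,\alpha,V,\varepsilon,p,\psi,b)$, as advertised in the corollary.
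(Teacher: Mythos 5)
Your proof is correct and takes essentially the same route as the paper, which treats the corollary as immediate from Theorem~\ref{mainthm3} together with the fact that $WDC(\tau)\cap DC(\tau')$ has full measure for $\tau>1/b$, $\tau'>b$. You simply spell out the countable intersection over $\tau_n\downarrow 1/b$ and the arithmetic showing that $\tau=1/b+\delta$ pushes the exponent $\frac{p}{1-\tau(b-1)}$ within $\varepsilon/2$ of $pb$, which is exactly the filled-in version of the paper's one-line justification.
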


The remainder of our paper is organized as follows. In Section \ref{Sec:UpperBd} we prove Theorem \ref{aethm}. In Section \ref{Sec:LowerBd} we prove Theorem \ref{lowbound}. In Section \ref{Sec:QD} we discuss the connection between quantum dynamics and semi-algebraic discrepancy and prove Theorem \ref{mainthm3}.

\section{Upper bound of discrepancy}\label{Sec:UpperBd}

Before proving Theorem \ref{aethm}, we recall a key property of semi-algebraic sets which will play a critical role in the proof: semi-algebraic sets may be ``covered nicely''. 
More precisely, we have the following lemma, which can be found in \cite{Bou05}.\footnote{In \cite{Bou05}, it is stated that Lemma \ref{covers} follows from the Yomdin-Gromov triangulation theorem \cite{MR880035, MR889980}. For the history and the complete proof of the Yomdin-Gromov triangulation theorem, see \cite{MR3990603}.}

\begin{lemma}[{\cite[Corollary 9.6]{Bou05}}\label{covers}]
	Let $\mathcal{S}\subseteq [0,1]^{b}$ be a semi-algebraic set of degree $B$. Let $\epsilon>0$ be a small number and $\leb(\mathcal{S})\leqslant \epsilon^{b}$. Then $\mathcal{S}$ can be covered by a family of $\epsilon$-balls with total number less than $B^{C(b)} \epsilon^{1-b}$.
\end{lemma}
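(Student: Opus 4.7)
The plan is to reduce the covering problem to one for the topological boundary $\partial\mathcal{S}$ using the volume hypothesis, and then to cover $\partial\mathcal{S}$ via the Yomdin--Gromov algebraic parameterization.

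First I would tile $[0,1]^{b}$ by the natural grid of axis-aligned cubes of side length $\epsilon$. For each such cube $Q$ that meets $\mathcal{S}$, exactly one of the following holds: either $Q\subseteq \mathcal{S}^{\circ}$, or $Q$ contains a point of $\mathcal{S}$ and a point of its complement, in which case connectedness of $Q$ forces $Q$ to meet $\partial\mathcal{S}$. The hypothesis $\leb(\mathcal{S})\leqslant \epsilon^{b}$ already limits the number of cubes of the first type to at most $1$, since the cubes are essentially disjoint and each contributes volume $\epsilon^{b}$. Hence it suffices to cover $\partial\mathcal{S}$ by $\lesssim B^{C(b)}\epsilon^{1-b}$ balls of radius $O_{b}(\epsilon)$; the radii can then be rescaled by a universal factor depending only on $b$ and absorbed into the constant $C(b)$.

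Next I would observe that $\partial\mathcal{S}$ is again semi-algebraic with degree bounded by $B^{O_{b}(1)}$ (its defining formula is obtained by Tarski--Seidenberg elimination applied to a Boolean combination of the original polynomial inequalities) and has topological dimension at most $b-1$. Invoking the Yomdin--Gromov algebraic parameterization theorem \cite{MR880035,MR889980,MR3990603}, I would decompose
\begin{equation*}
\partial\mathcal{S}\;=\;\bigcup_{i=1}^{M}\phi_{i}\bigl([0,1]^{d_{i}}\bigr),\qquad d_{i}\leqslant b-1,\quad M\leqslant B^{C(b)},
\end{equation*}
where each $\phi_{i}$ is a real analytic map whose first derivative is bounded uniformly in $i$ and $B$. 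Partitioning each parameter cube $[0,1]^{d_{i}}$ into $\lesssim_{b}\epsilon^{-d_{i}}$ subcubes of side comparable to $\epsilon$, each image subcube sits inside a ball of radius $O_{b}(\epsilon)$. Since $d_{i}\leqslant b-1$, each chart contributes $\lesssim_{b}\epsilon^{1-b}$ balls, and summing over the $M\leqslant B^{C(b)}$ charts yields the claimed total.

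The main obstacle is the quantitative invocation of Yomdin--Gromov: one needs the number of charts $M$ to depend only polynomially on the degree $B$, with the polynomial exponent depending only on the ambient dimension $b$, and the $C^{1}$-bound on the $\phi_{i}$ to be independent of $B$. This is the content of \cite{MR880035,MR889980} with the sharper bookkeeping carried out in \cite{MR3990603}, and once these facts are in hand the rest of the argument is elementary. A secondary point is verifying that $\deg\partial\mathcal{S}\leqslant B^{O_{b}(1)}$, which is routine quantifier elimination and does not interact delicately with the measure hypothesis.
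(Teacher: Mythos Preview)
The paper does not supply its own proof of this lemma; it simply quotes it as \cite[Corollary 9.6]{Bou05} and remarks in a footnote that Bourgain deduces it from the Yomdin--Gromov triangulation theorem. Your sketch is precisely that argument: reduce to the boundary via the grid/volume count, then parameterize $\partial\mathcal{S}$ by $B^{C(b)}$ charts of dimension at most $b-1$ with uniformly bounded $C^{1}$ norm and cover each chart by $O_{b}(\epsilon^{1-b})$ balls. This is correct and coincides with the route indicated by the paper's citation and footnote.
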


\subsection{Proof of Theorem \ref{aethm}}

\begin{proof}
    Set $\epsilon = \gamma / (2N^{\tau})$. For sufficiently large $N > N_0(\gamma,\tau,b)$, we have
    \begin{equation*}
        \leb(\mathcal{S}) \leqslant \eta \leqslant N^{-2\tau b} \leqslant \left( \frac{\gamma}{2} \right)^b N^{-\tau b} = \epsilon^b.
    \end{equation*}
    By Lemma~\ref{covers}, the set $\mathcal{S}$ can be covered by at most $B^{C(b)} \epsilon^{1 - b}$ many $\epsilon$-balls.

    We claim that for each such $\epsilon$-ball $D$, there exists at most one integer $n \in [1, N]$ such that 
    \begin{equation*}
        \theta + n\alpha \bmod \mathbb{Z}^b \in D.
    \end{equation*}
    Suppose to the contrary that there exist $1 \leqslant n < n' \leqslant N$ such that
    \begin{equation*}
        \theta + n\alpha \bmod \mathbb{Z}^b\in D, \quad \theta + n'\alpha \bmod \mathbb{Z}^b \in D.
    \end{equation*}
    Then we must have 
    \begin{equation*}
        \|(n - n')\alpha\|_{\mathbb{T}^b} \leqslant 2\epsilon.
    \end{equation*}
    Since $\alpha \in WDC(\gamma, \tau)$, it follows that
    \begin{equation*}
        \|(n - n')\alpha\|_{\mathbb{T}^b} \geqslant \frac{\gamma}{|n - n'|^{\tau}} \geqslant \frac{\gamma}{(N-1)^{\tau}} > 2\epsilon,
    \end{equation*}
    yielding a contradiction. Therefore, there is at most one such $n$ per ball.

    Hence, the number of $n$ satisfying $\theta + n\alpha \bmod \mathbb{Z}^b \in \mathcal{S}$ is bounded by the number of $\epsilon$-balls, which is
    \begin{equation*}
        \# \left\{ 1 \leqslant n \leqslant N : \theta + n\alpha \bmod \mathbb{Z}^{b} \in \mathcal{S} \right\} 
        \leqslant B^{C(b)} \epsilon^{1 - b} \lesssim_{\gamma,b} B^{C(b)} N^{\tau(b - 1)},
    \end{equation*}
    completing the proof.
\end{proof}

\section{Lower bound of discrepancy}\label{Sec:LowerBd}

In this section we will prove Theorem \ref{lowbound} by constructing a special hyperplane of codimension 1 for which it is ``easy'' to compute the semi-algebraic discrepancy. 
\subsection{Construction of independent vectors}

We begin by constructing $b - 1$ vectors to span our hyperplane. Before stating the result, we recall a classical lemma from number theory which will play a key role in our construction.

	\begin{lemma}[\cite{MR159802}]\label{schdis}
		There exists a full measure subset $\Omega\subseteq \mathbb{T}^{b}$ such that for any $\alpha\in \Omega$, there exists $C(\alpha)>0$ such that
		\begin{equation*}
			D_{N}(\{n\alpha\})\leqslant C(\alpha) N^{-1} (\log N)^{b+2}.
		\end{equation*}
	\end{lemma}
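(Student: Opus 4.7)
I would combine the classical Erd\H{o}s--Tur\'an--Koksma inequality with a Khintchine--Groshev type metric Diophantine estimate on $\alpha$. First, the multidimensional Erd\H{o}s--Tur\'an--Koksma inequality bounds, for any integer $M \geq 1$,
\[
D_N(\{n\alpha\}) \lesssim_b \frac{1}{M} + \frac{1}{N}\sum_{0 < \|h\|_\infty \leq M} \frac{1}{r(h)}\left|\sum_{n=1}^N e^{2\pi i n\langle h,\alpha\rangle}\right|,
\]
where $r(h) := \prod_{j=1}^b \max(1,|h_j|)$. Each inner sum is a finite geometric progression in $n$, hence bounded by $\min(N,\, 1/(2\|\langle h,\alpha\rangle\|_{\mathbb{T}}))$.

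Next, a standard convergence-side Borel--Cantelli argument produces a full-measure set $\Omega \subseteq \mathbb{T}^b$ on which, for each $\alpha \in \Omega$, there are $c(\alpha)>0$ and a fixed small $\varepsilon>0$ with
\[
\|\langle h,\alpha\rangle\|_{\mathbb{T}} \geq \frac{c(\alpha)}{r(h)\,(\log(2+\|h\|_\infty))^{1+\varepsilon}} \qquad \text{for every } h \in \mathbb{Z}^b \setminus \{0\}.
\]
This follows from the convergence of $\sum_{h \neq 0} 1/(r(h)(\log(2+\|h\|_\infty))^{1+\varepsilon})$, combined with the one-dimensional estimate $\leb\{\alpha \in \mathbb{T}^b : \|\langle h,\alpha\rangle\|_{\mathbb{T}} < \delta\} \leq 2\delta$ for each fixed $h \neq 0$.

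Finally, taking $M = N$ and substituting both estimates into the Erd\H{o}s--Tur\'an--Koksma bound, I would partition the outer sum according to whether the minimum $\min(N,\, 1/(2\|\langle h,\alpha\rangle\|_{\mathbb{T}}))$ is attained by $N$ or by $1/(2\|\langle h,\alpha\rangle\|_{\mathbb{T}})$. The standard estimate $\sum_{0 < \|h\|_\infty \leq N} 1/r(h) \lesssim (\log N)^b$ controls the first piece, while the hyperbolic-cross count $\#\{h \in \mathbb{Z}^b : r(h) \leq K\} \lesssim K(\log K)^{b-1}$ controls the second. Combining them and dividing by $N$ yields the desired bound $C(\alpha)\, N^{-1}(\log N)^{b+2}$.

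\emph{Main obstacle.} The delicate point is bookkeeping the polylogarithmic factors: the stated exponent $b+2$ leaves little slack, so the truncation $M$ and the Khintchine--Groshev exponent $\varepsilon$ must be chosen so that both halves of the case split above contribute no more than $(\log N)^{b+2}$ after the division by $N$. Everything else is routine.
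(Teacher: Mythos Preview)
The paper does not prove this lemma; it is quoted as a classical theorem of W.~M.~Schmidt and used as a black box, so there is no in-paper argument to compare against.

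Your outline, however, has a genuine gap. First, the Borel--Cantelli step fails for $b\geq 2$: the series
\[
\sum_{h\in\Z^b\setminus\{0\}}\frac{1}{r(h)\,(\log(2+\|h\|_\infty))^{1+\varepsilon}}
\]
diverges. Over the shell $\|h\|_\infty\in[2^j,2^{j+1})$ one has $\sum_{\text{shell}}1/r(h)\asymp j^{\,b-1}$ (since $\sum_{\|h\|_\infty\leq K}1/r(h)\asymp(\log K)^b$), so the $j$-th block contributes $\asymp j^{\,b-2-\varepsilon}$, which is not summable once $b\geq 2$. Second, and more seriously, even if you correct the log-exponent so that Borel--Cantelli goes through, the case split does not close. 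In the piece where the minimum equals $1/(2\|\langle h,\alpha\rangle\|_\T)$, inserting the pointwise Diophantine lower bound termwise turns each summand into a quantity of size $(\log N)^{O(1)}$ independent of $h$, and there are $\asymp N^b$ lattice points in the box; after dividing by $N$ you are left with a positive power of $N$, not a polylog. In fact the Erd\H{o}s--Tur\'an--Koksma inequality together with \emph{any} individual Diophantine condition on $\alpha$ produces at best a bound of the form $D_N\lesssim N^{-c}$ for some $c<1$, far from $N^{-1}(\log N)^{b+2}$. Schmidt's actual proof is of a different nature: it is an $L^2$/second-moment computation in the variable $\alpha\in\T^b$, carried out along a lacunary sequence of $N$'s and combined with Borel--Cantelli in $N$ (not in $h$) plus a monotonicity interpolation. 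That is the template you would need to follow.
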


\begin{proposition}\label{plane}
	There exists a full measure subset $\Omega\subseteq \mathbb{T}^{b}$ such that for any $\alpha\in \Omega$, the following holds. For any small $\varepsilon>0$ and sufficiently large $N$, there exist $\{n_{1}, \cdots, n_{b-1}\}\subseteq [1,N^{1+2\varepsilon}] \cap \mathbb{Z}$ such that for any $i=1,\cdots, b-1$,
	\begin{equation*}
		\{n_{i}\alpha\}\in [0,1]^{b}\ \text{and}\ \|\{n_{i}\alpha\}\|_\infty \leqslant N^{-1/b},
	\end{equation*}
	and
	\begin{equation*}
		\dim \spn \{ \{n_{1}\alpha\}, \cdots, \{n_{b-1}\alpha\} \}=b-1.
	\end{equation*}
\end{proposition}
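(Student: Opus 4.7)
The plan is to first harvest a large pool of candidate integers via Lemma~\ref{schdis}, and then thin this pool greedily to obtain linearly independent fractional parts. Setting $\delta:=N^{-1/b}$ and $N':=N^{1+2\varepsilon}$, Lemma~\ref{schdis} applied to the cube $I:=[0,\delta]^{b}$ of volume $N^{-1}$ yields, for $\alpha$ in the full-measure set $\Omega$ from that lemma,
\begin{equation*}
\#\{1\leqslant n\leqslant N':\{n\alpha\}\in I\}\geqslant N^{2\varepsilon}-C(\alpha)((1+2\varepsilon)\log N)^{b+2}\geqslant \tfrac{1}{2}N^{2\varepsilon}
\end{equation*}
for $N$ sufficiently large (depending on $\alpha,\varepsilon$). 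Denote this set of integers by $A\subseteq [1,N']\cap\mathbb{Z}$; this is the pool from which I will extract the $n_{i}$'s.

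I would then build $n_{1},\ldots,n_{b-1}$ greedily from $A$. Once $n_{1},\ldots,n_{j}\in A$ have been chosen with $\{n_{1}\alpha\},\ldots,\{n_{j}\alpha\}$ linearly independent (the case $j=0$ being vacuous), set $V_{j}:=\spn\{\{n_{1}\alpha\},\ldots,\{n_{j}\alpha\}\}$, so $\dim V_{j}=j$. The key claim is that, for every $j\leqslant b-2$,
\begin{equation*}
\#\{n\in A:\{n\alpha\}\in V_{j}\}\lesssim_{\alpha,b} N^{2\varepsilon j/b}(\log N)^{(b+2)(b-j)/b}.
\end{equation*}
Since $2\varepsilon j/b\leqslant 2\varepsilon(b-2)/b=2\varepsilon-4\varepsilon/b$, this bound is $o(N^{2\varepsilon})$ and hence strictly less than $|A|$ for all sufficiently large $N$. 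Consequently, some $n_{j+1}\in A$ satisfies $\{n_{j+1}\alpha\}\notin V_{j}$, and iterating this selection through $j=b-2$ produces $n_{1},\ldots,n_{b-1}$ with the required properties.

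The hard part is establishing the tube-counting claim. My approach is to thicken $V_{j}$ to $V_{j}^{(\sigma)}:=\{x\in\mathbb{R}^{b}:\dist(x,V_{j})<\sigma\}\supseteq V_{j}$, observe that $V_{j}\cap[0,\delta]^{b}$ sits inside a $j$-dimensional ball of radius $\sqrt{b}\,\delta$ and hence has $j$-dimensional volume $\lesssim_{b}\delta^{j}$, and cover $V_{j}^{(\sigma)}\cap[0,\delta]^{b}$ by $\lesssim_{b}(\delta/\sigma)^{j}$ axis-aligned cubes of side $O(\sigma)$. Applying Lemma~\ref{schdis} to each such cube and summing then yields
\begin{equation*}
\#\{n\leqslant N':\{n\alpha\}\in V_{j}^{(\sigma)}\cap[0,\delta]^{b}\}\lesssim_{b}(\delta/\sigma)^{j}\bigl(N'\sigma^{b}+C(\alpha)(\log N)^{b+2}\bigr),
\end{equation*}
and optimizing $\sigma\sim(\log N)^{(b+2)/b}/N^{(1+2\varepsilon)/b}$ balances the two terms and produces the claimed bound. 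The delicate point is ensuring the implicit constants in the covering step are uniform in the subspace $V_{j}$; this uniformity follows from standard convex-geometric estimates on $j$-dimensional sections of boxes, after which the $\sigma$-optimization is routine.
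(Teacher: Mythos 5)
Your proof is correct and takes a genuinely different route from the paper's. The paper builds $n_1,\ldots,n_{b-1}$ \emph{inductively}: at step $k$ it fixes a target $w$ on the sphere of radius $\tfrac12 N^{-1/b}$ in $[0,1]^b$ at positive \emph{angular} distance $\delta$ from $\spn\{\{n_1\alpha\},\ldots,\{n_{k-1}\alpha\}\}$, applies Lemma~\ref{schdis} to a single small cube $I_w$ to find $n_k\leqslant N^{1+2\varepsilon}$ with $\{n_k\alpha\}\in I_w$, and then shows by a short estimate on $\dist(\cdot,\cdot)$ that $\{n_k\alpha\}$ inherits an angular separation $\geqslant\delta/2$ from the span, hence is not in it. Your argument is instead \emph{pool-then-prune}: you first harvest $A\subseteq[1,N^{1+2\varepsilon}]$ of size $\gtrsim N^{2\varepsilon}$ with $\{n\alpha\}\in[0,N^{-1/b}]^b$, and then show that for $j\leqslant b-2$ a $j$-dimensional subspace $V_j$ can absorb only $o(N^{2\varepsilon})$ of the pool, by thickening $V_j$ to a $\sigma$-tube, covering its intersection with the box by $\lesssim_b(\delta/\sigma)^j$ cubes of side $O(\sigma)$, applying Lemma~\ref{schdis} to each, and optimizing $\sigma$. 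Your exponent bookkeeping (e.g.\ the choice $\sigma\sim(\log N)^{(b+2)/b}N^{-(1+2\varepsilon)/b}$ and the resulting bound $N^{2\varepsilon j/b}(\log N)^{(b+2)(b-j)/b}$) is correct, and since $2\varepsilon j/b\leqslant 2\varepsilon-4\varepsilon/b$ the greedy step indeed never gets stuck. Both approaches are valid; the paper's is more economical (one discrepancy application per step, no covering or optimization), while your tube-counting lemma is more modular and reusable. The one point you flag, uniformity of the covering constants over all $j$-planes, is indeed the only delicate detail, and it follows from the simple observation that a $j$-plane through the origin meets $[0,\delta]^b$ in a convex set of diameter $\leqslant\sqrt{b}\,\delta$, so it admits a $\sigma$-net of cardinality $\lesssim_b(\delta/\sigma)^j$; one should also note that the covering cubes may poke slightly outside $[0,1]^b$, but intersecting with $[0,1]^b$ yields axis-aligned boxes to which Lemma~\ref{schdis} still applies (up to a dimensional constant by inclusion-exclusion).
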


\begin{proof}
	
	We will construct $n_1, n_{2},\cdots, n_{b-1}$ by induction.
	
	{\bf Base step:} We begin by constructing $n_{1}$ as follows. Choose any vector $w=(w_{1},\cdots, w_{b})\in [0,1]^{b}$ such that $\|w\|_\infty =\frac{1}{2} N^{-1/b}$. Let $I_{w}\subseteq [0,1]^{b}$ be
	\begin{equation*}
		I_{w}=\prod_{j=1}^{b}\bigg([w_{j}-N^{-\frac{1+\varepsilon}{b}}, w_{j}+N^{-\frac{1+\varepsilon}{b}}]\bigcap [0,1]\bigg).
	\end{equation*}
    Thus $\leb(I_{w})\geqslant N^{-1-\varepsilon}$.
    
	By Lemma \ref{schdis}, for $\alpha\in\Omega$ and sufficiently large $N$,
	\begin{equation*}
		\bigg|\frac{\#\{1\leqslant n\leqslant N^{1+2\varepsilon}: n\alpha \bmod \mathbb{Z}^{b}\in I_{w}\}}{N^{1+2\varepsilon}}-\leb(I_{w})\bigg|\leqslant C(\alpha,b) \frac{(\log N)^{b+2}}{N^{1+2\varepsilon}}.
	\end{equation*}
	Thus
	\begin{equation*}
		\begin{split}
			\#\{1\leqslant n\leqslant N^{1+2\varepsilon}&:  n\alpha \bmod\mathbb{Z}^{b}\in I_{w}\}\\
            &\geqslant N^{1+2\varepsilon}( \leb(I_{w})- C(\alpha,b) N^{-1-2\varepsilon}(\log N)^{b+2})\\
			&\gtrsim N^{\varepsilon},
		\end{split}
	\end{equation*}
	which means that there exists
	$1\leqslant n_{1}\leqslant N^{1+2\varepsilon}$ such that $\{n_{1}\alpha\}\in I_{w}$, thus
	\begin{equation*}
		\|\{n_{1}\alpha\}-w\|_\infty \leqslant N^{-\frac{1+\varepsilon}{b}}.
	\end{equation*}
    Hence
	\begin{equation*}
		\|\{n_{1}\alpha\}\|_\infty \leqslant \|w\|_\infty+ N^{-\frac{1+\varepsilon}{b}} \leqslant N^{-1/b}.
	\end{equation*}
	Moreover, it is obvious that
	\begin{equation*}
		\dim \spn \{\{n_{1}\alpha\}\}=1.
	\end{equation*}
	This completes the base case of our induction.

	{\bf Induction step: } We will now assume that we have $n_{1},n_{2}, \cdots, n_{k-1}$ such that for any $i=1,\cdots, k-1$,
	\begin{equation}\label{a1}
		\{n_{i}\alpha\}\in [0,1]^{b} \ \text{and}\ \|\{n_{i}\alpha\}\|_\infty \leqslant N^{-1/b},
	\end{equation}
	and
	\begin{equation}\label{a2}
		\dim \spn\left\{\{n_{1}\alpha\},\cdots, \{n_{k-1}\alpha\}\right\}=k-1.
	\end{equation}
	We will construct $n_{k}$ such that \eqref{a1} and \eqref{a2} both hold when $k-1$ is replaced by $k$.

    We will proceed as follows. For any two non-zero vectors $u$ and $v$ in $\mathbb{R}^{b}$, we define
	\begin{equation*}
		\dist(u,v)=\sqrt{1-\frac{|\langle u,v\rangle|^{2}}{\langle u,u\rangle \langle v,v\rangle}},
	\end{equation*}
	The function $\dist(\cdot,\cdot)$ is the natural angular distance. Define $\|u\|_{2}=\sqrt{\langle u,u\rangle}$.

    Choose any $w=(w_{1},\cdots, w_{b})\in [0,1]^{b}$ such that
    \begin{equation}\label{wup}
        \|w\|_{2}=\frac{1}{2}N^{-1/b}.
    \end{equation}
    and, for some $\delta > 0$,
    \begin{equation}\label{wv}
        \dist(w,v)\geqslant \delta
    \end{equation}
    for any $v\in \spn \{\{n_{1}\alpha\}, \cdots, \{n_{k-1}\alpha\}\}$.
    Define $I_{w}\subseteq [0,1]^{b}$ as the cube
    \begin{equation*}
        I_{w}=\prod_{j=1}^{b}\bigg([w_{j}-N^{-\frac{1+\varepsilon}{b}}, w_{j}+N^{-\frac{1+\varepsilon}{b}}]\bigcap [0,1]\bigg).
    \end{equation*}
    Obviously, we have $\leb(I_{w})\geqslant N^{-1-\varepsilon}$. 

    By Lemma \ref{schdis}, for $\alpha\in\Omega$ and sufficiently large $N$,
    \begin{equation*}
        D_{N^{1+2\varepsilon}}(\{n\alpha\}) \leqslant C(\alpha) \frac{(\log N^{1+2\varepsilon})^{b+2}}{N^{1+2\varepsilon}}\leqslant C(\alpha,b)\frac{(\log N)^{b+2}}{N^{1+2\varepsilon}}.
    \end{equation*}
	Thus for sufficiently large $N$, we have
	\begin{equation*}
		\begin{split}
			\#\{1\leqslant n\leqslant N^{1+2\varepsilon}&:  n\alpha \bmod\mathbb{Z}^{b}\in I_{w}\}\\
            &\geqslant N^{1+2\varepsilon}( \leb(I_{w})- C(\alpha,b) N^{-1-2\varepsilon}(\log N)^{b+2})\\
			&\gtrsim N^{\varepsilon},
		\end{split}
	\end{equation*}
    which means that there exists
	$1\leqslant n_{k}\leqslant N^{1+2\varepsilon}$ such that $\{n_{k}\alpha\}\in I_{w}$, thus
    \begin{equation}\label{close}
		\|\{n_{k}\alpha\}-w\|_\infty \leqslant N^{-\frac{1+\varepsilon}{b}}.
	\end{equation}
	Combining \eqref{wup} with \eqref{close} yields
	\begin{equation}\label{r1}
		\|\{n_{k}\alpha\}\|_\infty \leqslant \|w\|_{\infty}+N^{-\frac{1+\varepsilon}{b}}\leqslant  N^{-1/b}.
	\end{equation}
    This shows $n_k$ satisfies \eqref{a1}. It remains to show $n_k$ satisfies \eqref{a2}.

    For any $v\in \spn\{\{n_{1}\alpha\}, \cdots, \{n_{k-1}\alpha\}\}$, by \eqref{wv}, we have
	\begin{equation}\label{d1}
		\begin{split}
			\dist(\{n_{k}\alpha\}, v) &\geqslant \dist(w,v)-\dist(w, \{n_{k}\alpha\})\\
			&\geqslant \delta- \sqrt{1-\frac{|\langle w, \{n_{k}\alpha\} \rangle|^{2}}{\langle w,w\rangle \langle \{n_{k}\alpha\},\{n_{k}\alpha\}\rangle}}.
		\end{split}
	\end{equation}
	It follows from \eqref{wup} and \eqref{close} that for sufficiently large $N$, depending on $\varepsilon, \delta,$ and $b,$
	\begin{equation}\label{d2}
		\begin{split}
			\frac{|\langle w, \{n_{k}\alpha\} \rangle|}{ \sqrt{\langle w,w\rangle \langle \{n_{k}\alpha\},\{n_{k}\alpha\}\rangle } } 
            &\geqslant \frac{\|w\|_{2}^{2}-|\langle w, \{n_{k}\alpha\} -w\rangle|}{ \|w\|_{2} \cdot \|\{n_{k}\alpha\}\|_{2} } \\
            &\geqslant \frac{\|w\|_{2}-\|\{n_{k}\alpha\} -w\|_{2} }{\|\{n_{k}\alpha\}\|_{2}}\\
            &\geqslant \frac{\|w\|_{2}-\|\{n_{k}\alpha\} -w\|_{2}}{\|w\|_{2}+\|\{n_{k}\alpha\} -w\|_{2}}\\
            &\geqslant \frac{N^{-\frac{1}{b}}-\sqrt{b}N^{-\frac{1+\varepsilon}{b}}}{N^{-\frac{1}{b}}+\sqrt{b}N^{-\frac{1+\varepsilon}{b}}}\\
			&\geqslant 1-\delta^{10}.
		\end{split}
	\end{equation}
	By \eqref{d1} and \eqref{d2}, we have
	\begin{equation*}
		\dist(\{n_{k}\alpha\}, v) \geqslant \frac{\delta}{2}>0,
	\end{equation*}
	which means 
	\begin{equation*}
		\{n_{k}\alpha\} \notin \spn\{\{n_{1}\alpha\},\cdots, \{n_{k-1}\alpha\}\}.
	\end{equation*}
    Thus
	\begin{equation}\label{r2}
		\dim \spn\{\{n_{1}\alpha\},\cdots, \{n_{k}\alpha\}\}=k.
	\end{equation}
	Combing \eqref{r1} with \eqref{r2} shows $n_k$ satisfies \eqref{a2}, which completes the proof of the induction step.
\end{proof}

\subsection{Proof of Theorem \ref{lowbound}}

In this subsection we will give a lower bound on the discrepancy for the hyperplane constructed above. 

\begin{theorem}\label{count}
	There exists a full measure subset $\Omega\subseteq \mathbb{T}^{b}$ such that for any $\alpha \in \Omega$, the following holds. For any small $\varepsilon>0$ and sufficiently large $N$, there exists a hyperplane $\mathcal S\subseteq [0,1]^{b}$ such that
	\begin{equation*}
		\# \{1\leqslant n\leqslant N: n\alpha\bmod \mathbb{Z}^{b}\in \mathcal{S}\}\geqslant  N^{ \frac{b-1}{b + 1}-\varepsilon}.
	\end{equation*}
\end{theorem}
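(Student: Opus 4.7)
The plan is to construct the hyperplane $\mathcal{S}$ explicitly using the vectors supplied by Proposition~\ref{plane}, and then exhibit a large family of integers $n \leq N$ whose orbit points $n\alpha \bmod \mathbb{Z}^b$ lie on $\mathcal{S}$ via small integer combinations. More precisely, I would apply Proposition~\ref{plane} with a parameter $M = M(N,\varepsilon,b)$ (playing the role of the ``$N$'' there) and a small auxiliary $\varepsilon' > 0$ (playing the role of the ``$\varepsilon$'' there) to produce integers $n_1, \ldots, n_{b-1} \in [1, M^{1+2\varepsilon'}]$ with $\|\{n_i\alpha\}\|_\infty \leqslant M^{-1/b}$ and with $\{n_1\alpha\}, \ldots, \{n_{b-1}\alpha\}$ $\mathbb{R}$-linearly independent. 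I then set $\mathcal{S}$ to be the intersection of $[0,1]^b$ with the $(b-1)$-dimensional linear subspace $\spn\{\{n_1\alpha\}, \ldots, \{n_{b-1}\alpha\}\}$; this is a hyperplane in $[0,1]^b$ by construction.

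The key observation is that, for any $(k_1, \ldots, k_{b-1}) \in \{0, 1, \ldots, K\}^{b-1}$ and $n := \sum_{i=1}^{b-1} k_i n_i$, the vector $\sum_i k_i \{n_i\alpha\}$ has each coordinate non-negative and at most $(b-1)K M^{-1/b}$. Provided $K$ is chosen so that $(b-1)KM^{-1/b} < 1$ (say $K \leqslant M^{1/b}/(2b)$), this sum lies in $[0,1)^b$. Since $n\alpha = \sum_i k_i n_i \alpha$ and $\sum_i k_i \lfloor n_i\alpha\rfloor \in \mathbb{Z}^b$, the sum $\sum_i k_i \{n_i\alpha\}$ is the unique representative of $n\alpha \bmod \mathbb{Z}^b$ in $[0,1)^b$, so $\{n\alpha\} = \sum_i k_i \{n_i\alpha\} \in \mathcal{S}$.

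Next I count the distinct integers $n$ arising this way. By $\mathbb{R}$-linear independence of $\{n_1\alpha\}, \ldots, \{n_{b-1}\alpha\}$ in $\mathbb{R}^b$, distinct tuples $(k_i)$ produce distinct vectors $\sum_i k_i \{n_i\alpha\}$ in $[0,1)^b$, hence distinct values of $\{n\alpha\}$. Since $n \mapsto \{n\alpha\}$ is a well-defined function, distinct values of $\{n\alpha\}$ force distinct values of $n$. Excluding the zero tuple, this yields at least $(K+1)^{b-1} - 1$ distinct positive integers $n$ with $\{n\alpha\} \in \mathcal{S}$, and they all satisfy $1 \leqslant n \leqslant K(b-1)M^{1+2\varepsilon'}$, so belong to $[1,N]$ provided $K(b-1)M^{1+2\varepsilon'} \leqslant N$.

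Finally I balance the two constraints $K \lesssim_b M^{1/b}$ (non-wrapping) and $K M^{1+2\varepsilon'} \lesssim_b N$ (range). Setting them roughly equal gives $M \sim N^{b/(b+1+2b\varepsilon')}$ and $K \sim N^{1/(b+1+2b\varepsilon')}$, producing a count of at least $\gtrsim_b N^{(b-1)/(b+1+2b\varepsilon')}$, which is bounded below by $N^{(b-1)/(b+1)-\varepsilon}$ once $\varepsilon'$ is chosen small enough in terms of $\varepsilon$ and $b$. The main technical point is simply the verification that $\mathbb{R}$-linear independence of $\{n_i\alpha\}$ upgrades the counting of tuples $(k_i)$ to a counting of distinct integers $n$, via the non-wrapping regime; once this is in hand the exponent $(b-1)/(b+1)$ emerges naturally from the optimization.
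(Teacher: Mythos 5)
Your proposal is correct and takes essentially the same approach as the paper: apply Proposition~\ref{plane} to get linearly independent vectors $\{n_i\alpha\}$ of small norm, take $\mathcal{S}$ to be the hyperplane they span, count the non-wrapping lattice combinations $\sum k_i\{n_i\alpha\}$, and optimize the range parameter. The only stylistic difference is that you carry a free parameter $M$ and optimize at the end (with the non-wrapping reasoning spelled out explicitly), whereas the paper plugs in the optimal scale $M = N^{b/(b+1)}$ from the start and works with per-coordinate constraints on the $k_i$; these are interchangeable.
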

\begin{proof}
	Let $r=\frac{b}{b+1}$. By Proposition \ref{plane}, for any small $\varepsilon>0$ and sufficiently large $N$, there exist $\{n_{1},\cdots, n_{b-1}\}\subseteq [1, N^{r(1+2\varepsilon)}]\cap \mathbb{Z}$ such that for any $i=1,\cdots, b-1$,
	\begin{equation*}
		\{n_{i}\alpha\}\in[0,1]^{b}\ \text{and}\ \| \{n_{i}\alpha\}\|_\infty\leqslant N^{-r/b},
	\end{equation*}
	and
	\begin{equation}\label{gjun252}
		\dim \spn \{\{n_{1}\alpha\}, \cdots, \{n_{b-1}\alpha\} \}=b-1.
	\end{equation}
	    
	Define the hyperplane $\mathcal S$ as 
	\begin{equation*}
		\mathcal{S} := [0,1]^{b}\bigcap \spn \{ \{n_{1}\alpha\}, \{n_{2}\alpha\},\cdots, \{n_{b-1}\alpha\}  \}. 
	\end{equation*}
	It is clear that
    for any positive integers $k_i$ with  $i=1,2,\cdots, b-1$ satisfying 
    \begin{equation*}
        1\leqslant k_{i}n_{i}\leqslant \frac{N}{b} \ \text{and}\ k_{i}   \|\{n_{i}\alpha\}\|_{\infty}<\frac{1}{b}, 
    \end{equation*}
    we have
    \begin{equation}\label{gjun251}
		(k_1\{n_1 \alpha\} + \cdots + k_{b-1}\{n_{b-1} \alpha\} ) \in \mathcal{S}.
	\end{equation}
    Therefore,
	\begin{equation*}
		\begin{split}
			\# \{1\leqslant n\leqslant N& :  n\alpha\bmod\mathbb{Z}^{b}\in \mathcal{S}\}\\
            &= \# \{n\alpha\bmod\mathbb{Z}^{b}: 1\leqslant n\leqslant N,n\alpha\bmod\mathbb{Z}^{b}\in \mathcal{S}\}\\
            \text{ by  \eqref{gjun251}}&\geqslant \# \{(\sum_{i=1}^{b-1}k_{i} n_i\alpha)\bmod\Z^b: 1\leqslant k_{i}n_{i}\leqslant \frac{N }{b}\ \text{and}\ k_{i}   \|\{n_{i}\alpha\}\|_{\infty}<\frac{1}{b}\}\\
            &= \# \{\sum_{i=1}^{b-1}k_{i} \{n_i\alpha\}: 1\leqslant k_{i}n_{i}\leqslant \frac{N }{b}\ \text{and}\ k_{i}   \|\{n_{i}\alpha\}\|_{\infty}<\frac{1}{b}\}\\
			\text{ by  \eqref{gjun252}}&=\prod_{i=1}^{b-1}\# \{k_{i}: 1\leqslant k_{i}n_{i}\leqslant \frac{N}{b} \ \text{and}\ k_{i}   \|\{n_{i}\alpha\}\|_{\infty}<\frac{1}{b}\}\\
			&\geqslant \prod_{i=1}^{b-1} \min \{b^{-1}N^{1-r(1+2\varepsilon)}, b^{-1}N^{r/b}\}\\
			&\geqslant  N^{\frac{b-1}{b+1}-\varepsilon},
		\end{split}
	\end{equation*}
	which completes our proof.
\end{proof}

\section{Quantum dynamics}\label{Sec:QD}

Before proceeding with the proof of Theorem \ref{mainthm3}, we recall a few preliminary notions which connect quantum dynamics to semi-algebraic discrepancy.

\subsection{A large deviation theorem}
The following mirrors the discussion from \cite{LiuQD23} specialized to dimension 1. Let $H_{\theta,\alpha}$ be given by \eqref{eq:LongRangeDef}. In this section, we will fix $K$ sufficiently large so that $\sigma(H_{\theta,\alpha}) \subseteq [-K+1, K-1]$.

First, we define the Green's function of $H_{\theta,\alpha}$. Let  $R_{{\Lambda}}$  be the operator of restriction (i.e. projection) to $\Lambda \subseteq \Z$. The Green's function (restricted to $\Lambda$) at $z$ is defined as
\begin{equation*}\label{g0}
	G_{{\Lambda}}(z,\theta)=(R_{{\Lambda}}(H_{\theta,\alpha}-zI)R_{{\Lambda}})^{-1}.
\end{equation*}
Clearly, $G_{{\Lambda}}(z,\theta)$ is always well-defined for $z\in \C_+\equiv \{z\in \C: \Im z>0\}$ by self-adjointness of $H_{\theta,\alpha}$.

We say the Green's function  of an operator $H_{\theta,\alpha}$  satisfies the Large Deviation Theorem (LDT) in complexified energies if there exist $\epsilon_0>0$ and $N_0>0$ such that for  any  $N\geqslant N_0$, there exists a  subset $\Theta_N\subseteq \mathbb{T}^b$  such that
\begin{equation*}
	\leb(\Theta_N)\leqslant \mathrm{e}^{-{N}^{\sigma_2}},
\end{equation*}
and for any $\theta\notin  \Theta_N \bmod \Z^b$,
\begin{equation}\label{sg}
    \begin{split}
        \| G_{[-N,N]}(z,\theta) \|   & \leqslant \mathrm{e}^{N^{\sigma_1}},                                       \\
	|G_{[-N,N]}(z,\theta)(n,m)| & \leqslant \mathrm{e}^{-c_2 |n-m|}, \text{ for } |n-m|\geqslant N/10,
    \end{split}
\end{equation}
where  $z=E+\mathrm{i}\epsilon$ with $E\in[-K+1,K-1]$ and $0<\epsilon\leqslant \epsilon_0$.

The following result from \cite{LiuPDE} shows that the LDT holds for $H_{\theta,\alpha}$ given by \eqref{eq:LongRangeDef}.

\begin{theorem}[{\cite[Theorem 3.11]{LiuPDE}}\label{LDTshift}]
    Let $H_{\theta,\alpha}$ be given by \eqref{eq:LongRangeDef}. Let $\alpha\in DC(\gamma,\tau)$ and $1-1/(b\tau)<\sigma<1$. Then for any $\varepsilon>0$, there exists $\lambda_{0}=\lambda_{0}(A,\alpha,V,\sigma,\varepsilon)>0$ such that for any $\lambda>\lambda_{0}$ and any $N$, there exists $\Theta_{N}\subseteq \mathbb{T}^{b}$ such that
    \begin{equation*}
	\leb(\Theta_N)\leqslant \mathrm{e}^{-{N}^{\sigma-1/(b^{2}\tau)+1/(b^{3}\tau^{2})-\varepsilon}},
\end{equation*}
and for any $\theta\notin  \Theta_N \bmod \Z^b$,
\begin{align}
	\| G_{[-N,N]}(z,\theta) \|   & \leqslant \mathrm{e}^{N^{\sigma}},     \label{g07174}                               \\
	|G_{[-N,N]}(z,\theta)(n,m)| & \leqslant \mathrm{e}^{-\frac{1}{2}c_{1} |n-m|}, \ \text{for}\  |n-m|\geqslant N/10.\label{g07171}
\end{align}
\end{theorem}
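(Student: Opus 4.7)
The plan is to prove this by multi-scale analysis (MSA) in the Bourgain--Goldstein--Schlag framework, with semi-algebraic discrepancy estimates providing the quantitative control on exceptional sets of $\theta$. I would first establish a perturbative initial-scale estimate: for $\lambda$ sufficiently large, the potential term $\lambda V(\theta+n\alpha)$ dominates on any fixed window, so the operator $R_{[-N_0,N_0]}(H_{\theta,\alpha}-z)R_{[-N_0,N_0]}$ is well-approximated by its diagonal. A Neumann expansion then yields $\|G_{[-N_0,N_0]}(z,\theta)\|\lesssim \lambda^{-1/2}$ with exponential off-diagonal decay (inherited from \eqref{eq:Bounded}), valid outside a set of $\theta$ where $\inf_{|n|\leq N_0}|V(\theta+n\alpha)-z/\lambda|$ is anomalously small. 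Since $V\in C^\omega(\T^b,\R)$, one truncates its Fourier expansion, whence the exceptional set can be written as a semi-algebraic set $\Theta_{N_0}\subseteq \T^b$ of degree polynomial in $N_0$.

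The core of the argument is the inductive step. Assume \eqref{g07174}--\eqref{g07171} hold at scale $N$ for every $\theta\notin\Theta_N\bmod\Z^b$, where $\Theta_N$ is semi-algebraic of controlled degree with $\leb(\Theta_N)\leq \mathrm{e}^{-N^{\sigma_2}}$. To upgrade to a scale $N'\approx N^{1/\sigma}$, I would cover $[-N',N']$ by overlapping blocks $[-N,N]+k$ and deploy the resolvent identity (equivalently, a Schur complement / avalanche principle). The target estimate at scale $N'$ follows once one shows that, for each $n\in[-N',N']$, some nearby center $k$ satisfies $\theta+k\alpha\notin\Theta_N$. This reduces the question to bounding
\begin{equation*}
\#\{1\leq k\leq N': \theta+k\alpha \bmod \Z^b\in \Theta_N\},
\end{equation*}
which is precisely a semi-algebraic discrepancy count. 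Applying the best available bound of this type (in the original \cite{LiuPDE} setup, one uses $\alpha\in DC(\gamma,\tau)$ rather than $WDC$), together with a Chebyshev-type argument on $\theta$, one controls $\leb(\Theta_{N'})$ in terms of $\leb(\Theta_N)$ and the discrepancy exponent.

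The quantitative exponent $\sigma-1/(b^2\tau)+1/(b^3\tau^2)-\varepsilon$ arises by carefully balancing, at each induction stage, the measure bound $\mathrm{e}^{-N^\sigma}$ against the semi-algebraic discrepancy count under the Diophantine hypothesis; the compound term $-1/(b^2\tau)+1/(b^3\tau^2)$ reflects the two-step interplay between how the degree of $\Theta_N$ is lifted to $\Theta_{N'}$ and how $\tau$ enters the discrepancy estimate. The main obstacle, as is typical for MSA of this kind, is controlling semi-algebraic degree growth: at each iteration, applying Cramer's rule to $G_{[-N,N]}$ and pulling back by the shift $n\alpha$ compounds the defining polynomials, so the degree grows polynomially in the previous degree. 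One must verify that after iterating to scale $N$ the degree remains sub-exponential in $N$, so that the prefactor $B^{C(b)}$ in the discrepancy bound does not spoil the induction. A secondary technical issue is uniformity in the complexified energies $z=E+\mathrm{i}\epsilon$: this is handled by combining the Schur-complement decomposition with Cartan-type estimates on the subharmonic function $\log\|G_\Lambda(z,\theta)\|$ in the strip $0<\mathrm{Im}\,z\leq \epsilon_0$.
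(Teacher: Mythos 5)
The paper does not prove Theorem~\ref{LDTshift}: it is stated as a verbatim citation of \cite[Theorem~3.11]{LiuPDE} and used as a black box in Section~\ref{Sec:QD}. There is therefore no in-paper proof to compare your sketch against. You should be aware, however, that your outline inverts the logical architecture that this paper actually uses, and also appears to misplace where semi-algebraic discrepancy enters relative to the LDT. The introduction states that the LDT ``was established through harmonic analysis and crucially relies on $\alpha\in DC$''; that is, the measure bound on $\Theta_N$ in a statement of this type comes from subharmonicity of $\theta\mapsto\log\|G_{[-N,N]}(z,\theta)\|$ and Cartan/BMO-type estimates, together with an avalanche-principle or resolvent-coupling step to pass between scales, not from a discrepancy count. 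In the present paper, the semi-algebraic discrepancy bound (Theorem~\ref{aethm}) is applied strictly \emph{downstream} of Theorem~\ref{LDTshift}: the LDT hands you the semi-algebraic bad set $\Theta_{N_1}$ with exponentially small measure, and then Theorem~\ref{aethm} bounds $\#\{n\in[-N,N]:\theta+n\alpha\in\Theta_{N_1}\}$ (this is \eqref{sublinearlast}), which is what feeds into the quantum-dynamics criterion, Theorem~\ref{criterion}.

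The concrete gap in your proposal is in the inductive step: you claim the passage from $\leb(\Theta_N)$ to $\leb(\Theta_{N'})$ is controlled by the discrepancy count $\#\{1\leqslant k\leqslant N':\theta+k\alpha\in\Theta_N\}$ ``together with a Chebyshev-type argument on $\theta$.'' These are not interchangeable. The discrepancy count is a fixed-$\theta$ statement about how often the orbit hits $\Theta_N$; the LDT requires a \emph{measure} estimate over $\theta$ for the new exceptional set $\Theta_{N'}$, and no Chebyshev averaging turns the former into the latter without already understanding the structure of $\Theta_{N'}$. The measure control at the new scale must come from the harmonic-analytic side (Cartan estimates on the subharmonic function in $\theta$, with the avalanche principle to show failure at scale $N'$ forces failure at many shifted scale-$N$ windows). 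Your instinct about degree bookkeeping for semi-algebraic sets and about uniformity over the complexified strip via subharmonicity is sound, but the role you assign to discrepancy in deriving the LDT itself would not give the stated bound and would be circular in the paper's framework, where discrepancy is the payoff used \emph{after} the LDT, not an input to it.
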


The LDT and semi-algebraic discrepancy were explicitly linked to quantum dynamics in \cite{LiuQD23}.

\begin{theorem}[{\cite[Corollary 2.3]{LiuQD23}}]\label{criterion}
	Define $\mathcal{B}_{N,N_1}$ as
	\begin{equation*}
		\mathcal{B}_{N,N_1}=\{ n\in [-N,N]: G_{[-N_{1},N_{1}]}(z,\theta+n\alpha) \ \text{doesn't satisfy \eqref{sg}}\}.
	\end{equation*}
	Assume that there exists $\epsilon_0>0$ and $N_{0}>0$ such that for any $z=E+\mathrm{i}\epsilon $ with  $|E|\leqslant K$ and $0<\epsilon\leqslant \epsilon_0$, and arbitrarily  small $\varepsilon>0$,
	\begin{equation*}
		\#  \mathcal{B}_{N,[ N^{\varepsilon} ]}\leqslant N^{1-\delta} {\text{ when }} N \geqslant N_0 
	\end{equation*}
	($N_0$ may depend on $\varepsilon$).
	Then for any $\phi$ with compact support and any $\varepsilon>0$ there exists $T_0>0$ (depending on $b, p,\phi,K,\sigma_{1},\delta, \epsilon_0, c_1, c_{2}, C_1, N_0$ and $\varepsilon$) such that for any $T\geqslant T_0$,
	\begin{align*}
		\langle|{X}_{H_{\theta,\alpha}}|_{\phi}^p\rangle(T)        \leqslant  (\log T)^{  \frac{p}{\delta} +\varepsilon}. 
	\end{align*}
\end{theorem}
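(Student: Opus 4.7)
The plan is to derive the dynamical bound from the LDT-type hypothesis via the now-classical Jitomirskaya--Schulz-Baldes / Bourgain machinery: first, pass from quantum dynamics to matrix elements of the resolvent via Parseval, then estimate these matrix elements through a resolvent chain that exploits the scarcity of the exceptional set $\mathcal{B}_{N,N_1}$.

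Concretely, for the time-averaged moment one uses the identity
\begin{equation*}
\frac{2}{T}\int_0^\infty e^{-2t/T}|\langle e^{-itH_{\theta,\alpha}}\phi,\delta_n\rangle|^2\,dt = \frac{1}{\pi T}\int_{\mathbb{R}}\bigl|\langle (H_{\theta,\alpha}-E-i/T)^{-1}\phi,\delta_n\rangle\bigr|^2\,dE,
\end{equation*}
together with a Chebyshev argument to pass from the time average to $\sup_{t\leqslant T}$, reducing the estimation of $\langle|X_{H_{\theta,\alpha}}|^p_\phi\rangle(t)$ for $t\leqslant T$ to uniform control of the off-diagonal entries $G(z,\theta)(m,n)$ of the resolvent at $z=E+i/T$ for $E\in[-K,K]$ and $m\in\mathrm{supp}(\phi)$. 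Fix small $\varepsilon>0$, set $R:=\lfloor(\log T)^{1/\delta+\varepsilon}\rfloor$, and split the moment at $|n|=R$. The bulk piece $|n|\leqslant R$ is bounded by $R^p=(\log T)^{p/\delta+p\varepsilon}$ using only $\|\psi_T\|_2=1$, which already matches the target bound after absorbing $p\varepsilon$ into $\varepsilon$.

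For the tail $|n|>R$ one must show decay faster than any polynomial in $T$. For each such $n$, apply the hypothesis with $N=|n|$ and $N_1=\lfloor|n|^\varepsilon\rfloor$, giving $|\mathcal{B}_{|n|,N_1}|\leqslant|n|^{1-\delta}$. Estimate $G(E+i/T,\theta)(m,n)$ by a resolvent chain stepping from $\mathrm{supp}(\phi)$ to $n$ through overlapping boxes of radius $N_1$. Each step whose box center lies off $\mathcal{B}_{|n|,N_1}$ contributes a factor $e^{-c_2 N_1}$ via \eqref{sg}; each step whose center lies in $\mathcal{B}_{|n|,N_1}$ costs only the a priori factor $\|G_{[-N_1,N_1]}\|\leqslant 1/\mathrm{Im}\,z=T$. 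With $\sim|n|/N_1$ total steps and at most $|n|^{1-\delta}$ bad ones, the chain produces
\begin{equation*}
|G(E+i/T,\theta)(m,n)|\leqslant e^{-c_2(|n|-N_1|n|^{1-\delta})}\cdot T^{|n|^{1-\delta}}.
\end{equation*}
Our choice $|n|>R$ forces $|n|^\delta\geqslant(\log T)^{1+\delta\varepsilon}\gg\log T$, so the negative exponential dominates, and $|G|\leqslant e^{-c_2|n|/3}$. The resulting tail sum $\sum_{|n|>R}|n|^p e^{-c_2|n|/3}$ is then negligible compared to the bulk contribution.

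The main obstacle is the bookkeeping in the resolvent chain: one must arrange the chain so that each ``good'' center remains outside $\mathcal{B}_{|n|,N_1}$ \emph{and} sits at distance $\geqslant N_1/10$ from the endpoints of that step, so that the off-diagonal decay clause in \eqref{sg} actually applies. This forces the chain to be chosen adaptively to the (fixed but unknown) configuration of bad centers. The parameter balance is also delicate: $N_1=|n|^\varepsilon$ must be small enough that the hypothesis remains meaningful, but large enough that the exponential gain $e^{-c_2 N_1}$ per good step beats the polynomial loss $T$ per bad step. This balance is precisely what produces the exponent $1/\delta$ in the final bound.
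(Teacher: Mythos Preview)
The paper does not prove Theorem~\ref{criterion}: it is quoted from \cite[Corollary~2.3]{LiuQD23} and used as a black box in the proof of Theorem~\ref{mainthm3}. There is therefore no in-paper argument to compare against.

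Your sketch follows the standard route (Parseval reduction to resolvent matrix elements, then a block-resolvent chain with a good/bad dichotomy governed by $\mathcal{B}_{N,N_1}$), and the parameter balance you identify---exponential gain $e^{-c_2 N_1}$ per good step versus loss $T$ per bad step, resolved by $|n|^\delta \gg \log T$---is precisely what produces the exponent $p/\delta$. Two places deserve more care. First, the passage from the time-averaged Parseval identity to a pointwise-in-$T$ bound on $\langle|X|^p\rangle(T)$ is not a Chebyshev argument in any obvious sense: knowing a weighted time average does not bound an individual value. The device that actually works here bounds the outside probability $\sum_{|n|>R}|\psi_T(n)|^2$ directly by a resolvent integral at $z=E+\mathrm{i}/T$ plus an exponentially small Combes--Thomas tail; this is what yields the pointwise estimate. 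Second, for the long-range operator \eqref{eq:LongRangeDef} the resolvent chain is not a single path: each application of the geometric resolvent identity produces a sum over boundary sites weighted by $A(\cdot,\cdot)$, and controlling these sums requires both \eqref{eq:Bounded} and the norm bound $\|G_{[-N_1,N_1]}\|\leqslant e^{N_1^{\sigma_1}}$ from \eqref{sg}, not only the off-diagonal decay. Your sketch omits the role of $\sigma_1$, but it enters the bookkeeping exactly here, which is why it appears among the constants on which $T_0$ depends.
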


\subsection{Proof of Theorem \ref{mainthm3}}
Using Theorems \ref{LDTshift} and \ref{criterion} as our starting point, we may now establish Theorem \ref{mainthm3}.

\begin{proof}
Let $\varepsilon>0$ be sufficiently small and let $N_{1}=[N^{\varepsilon}]$ be sufficiently large (depending on $\varepsilon$).
By Theorem \ref{LDTshift}, for $\alpha\in DC(\tau')$ and $\sigma=1-1/(2b\tau')$ there exists $\lambda_{0}=\lambda_{0}(A,\alpha,V)>0$ such that for every $\lambda>\lambda_{0}$ the LDT holds with
\begin{equation}\label{g07172}
    \leb(\Theta_{N_{1}})\leqslant \mathrm{e}^{-N_{1}^{c}}<\mathrm{e}^{-N^{\varepsilon}}.
\end{equation}
By first approximating the analytic potential with trigonometric polynomials, and then further approximating the trigonometric functions using their Taylor series, one can, via standard perturbation theory (see p.56 of \cite{Bou05}), assume that $\Theta_{N_1}$ is a semi-algebraic set of degree at most 
\begin{equation}\label{g07173}
    \deg(\Theta_{N_1}) \leqslant N_1^{C} < N^{\varepsilon}.
\end{equation}

By \eqref{g07172}, \eqref{g07173} and Theorem \ref{aethm}, we know that for $\alpha\in WDC(\tau)$ and sufficiently large $N$,
\begin{equation}\label{sublinearlast}
    \# \{n\in [-N,N]: \theta+n\alpha \bmod\mathbb{Z}^{b}\in \Theta_{N_{1}}\} \leqslant N^{\tau(b-1)+\varepsilon}.
\end{equation}
Theorem \ref{mainthm3} now follows immediately from Theorem \ref{criterion} and \eqref{sublinearlast}.
\end{proof}

\section*{Acknowledgments}

This project was conducted as part of the  High School Research Program ``PReMa'' (Program for Research in Mathematics) at Texas A\&M University.  We thank the program director, Kun Wang, for bringing the initial team together.

W. Liu was a Simons Fellow in Mathematics for the 2024–2025 academic year. He gratefully acknowledges the Department of Mathematics at the University of California, Berkeley for their hospitality during his visit, where part of this work was completed. R. Zhang is supported by NSF CAREER DMS-2143989 and Sloan Research Fellowship.  

This work was supported in part by NSF DMS-2246031,  DMS-2052572,  and the Dolciani Mathematics Enrichment Grant from MAA.

\section*{Statements and Declarations}
{\bf Conflict of Interest} 
The authors declare no conflicts of interest.

\vspace{0.2in}
{\bf Data Availability}
Data sharing is not applicable to this article as no new data were created or analyzed in this study.

\bibliographystyle{amsalpha}
\bibliography{main}
\end{document}